\documentclass{article}
\usepackage{amsmath, amssymb, amsfonts, amsthm}
\usepackage{graphicx}
\usepackage{hyperref}
\usepackage{tikz}
\usepackage{tikz-qtree}
\usepackage[margin=2.5cm]{geometry}

\newtheorem{corollary}{Corollary}
\newcommand{\classX}[1]{\ensuremath{\text{\textsf{\textbf{#1}}}}} 

\newcommand{\classNP}{\classX{NP}}
\newcommand{\NPC}{\classX{NP-complete}}

\newcommand{\NPH}{\classX{NP-hard}}

\newcommand{\MAD}{\ensuremath{\text{\textit{MAD}}}}
\newcommand{\Prob}[1]{\ensuremath{\text{\textsc{#1}}}}

\newtheorem{theorem}{Theorem}
\newtheorem{lemma}{Lemma}
\newtheorem{proposition}{Proposition}

\begin{document}

\title{On graphs with well-distributed edge density}
\author{Syed Mujtaba Hassan\footnote{\url{ms06948@st.habib.edu.pk}, Computer Science Department, Habib University, Karachi, Pakistan (corresponding author)} \and Shahid Hussain\footnote{\url{shahidhussain@iba.edu.pk}, Computer Science Department, Institute of Business Administration, Karachi, Pakistan}}
\date{March 21, 2024}
\maketitle

\begin{abstract}
    In this paper, we introduce a class of graphs which we call \emph{average hereditary graphs}. Many graphs that occur in the usual graph theory applications belong to this class of graphs. Many popular types of graphs fall under this class, such as regular graphs, trees and other popular classes of graphs. The paper aims to explore some interesting properties regarding colorings average hereditary graphs. We prove a new upper bound for the chromatic number of a graph in terms of its maximum average degree and show that this bound is an improvement on previous bounds. From this, we show a relationship between the average degree and the chromatic number of an \textit{average hereditary graph}. We then show that even with new bound, the graph \Prob{3-coloring} problem remains $\NPH$ when the input is restricted to \emph{average hereditary graphs}. We provide an equivalent condition for a graph to be \textit{average hereditary}, through which we show that we can decide if a given graph is \textit{average hereditary} in polynomial time. \smallskip

    \noindent\textbf{Keywords:} Graph theory, graph coloring, NP-Hard graph problem, graph average degree
\end{abstract}

\renewcommand\thefootnote{}

\renewcommand\thefootnote{\fnsymbol{footnote}}
\setcounter{footnote}{1}
\section{Introduction}\label{sec: intro}
We introduce a new class of graphs that we call average hereditary graphs. There are many interesting properties regarding colorings graphs of this class, a subset of which we explore in this paper. We provide a new upper bound for the chromatic number of a graph in terms of its maximum average degree, that is $\chi(G) \leq \MAD(G) + 1$, we show that this bound improves upon previously known results. We use this bound to find a relationship between the average degree of a graph and its chromatic number, from this, we prove a general case for which we know the exact chromatic number of an average hereditary graph. However, despite these results the graph \Prob{3-coloring} problem remains $\NPH$ when the input is restricted to average hereditary graphs. We show this in section~\ref*{sec: hardness}. This result is of interest to us in studying the complexity dichotomy of graph \Prob{3-coloring} problem.

We will analyze the computational complexity of deciding if a graph is average hereditary, and show that we can compute if a graph is average hereditary in polynomial time. This is done by providing an equivalent condition for a graph to be average hereditary. What's interesting is that the class of average hereditary graphs is quite broad as it contains many popular classes of graphs such as trees, regular graphs, and others commonly encountered in applications. So the class only restricts some extreme cases. 

The main result of our paper is the new general case upper bound on the chromatic number of graphs which is an improvement on previous bounds found in literature. 

\subsection{Preliminaries}\label{sec: prelim}
Throughout this document, we will denote a graph as $G = (V, E)$ where $G$ is a graph with vertex set $V$ and edge set $E$. An edge is represented as a set $\{v,u\}$ for some $v$ and $u$ belonging to $V$ such that $v \neq u$. Also, for a graph $G$ we use $V(G)$ to denote the vertex set of $G$ and $E(G)$ to denote the edge set of $G$. All graphs considered here are undirected and simple meaning they contain no loops, multi-edges or directed edges. We will use $H \subseteq G$ to denote an induced subgraph of $G$. For some $U \subseteq V$, $G-U$ denotes the subgraph of $G$ obtained by removing the vertices in $U$. For $H \subseteq G$, \;$\overline{H}$ denote the graph $G-V(H)$. For each vertex $v\in V$ we used $d_G(v)$ to denote the degree of $v$ in $G$. We also use $\Delta(G)$ to denote the maximum degree of $G$ and $\delta(G)$ to denote the minimum degree of $G$. The average degree of $G$, denoted by $d(G)$ is the average of all the degrees of $G$, which can be computed by $d(G) = \frac{\sum_{v\in V}d_G(v)}{|V|}= \frac{2|E|}{|V|}$ if $V$ is nonempty. For a null graph, we define the average degree as 0. If the degrees of all the vertices of a graph $G$ are equal to $k$, we say $G$ is $k$-regular.
 
The edge cut $[V(H),\overline{V(H)}]$ is the smallest set of edges you need to remove from $G$ to break $G$ into two components $H$ and $\overline{H}$, for $H \subseteq G$. The edge connectivity is the size of the smallest cut edge, denoted by $\kappa'(G)$. 
The clique size is the size of the largest complete subgraph of $G$, we denote it by $\omega(G)$. 

Coloring a graph $G$ is assigning a color to each vertex of $G$ such that if two vertices are adjacent then they are assigned a different color than each other. The smallest natural number $k$ such that $G$ can be colored with $k$ colors is known as the chromatic number of $G$. We denote the chromatic number of $G$ by $\chi(G)$.
If $k = \chi(G)$ then we say $G$ is $k$-chromatic. $G$ is called $k$-critical if $\chi(G) = k$ and $\forall v \in V,\; \chi(G-v) < k$. $\mathbb{N}$ denotes the set of natural numbers which includes $0$.

\subsection{Average Hereditary Graphs}\label{sec: avg}
We now define a new class of graphs, which we call \emph{average hereditary graphs}, and demonstrate that this class is broad, in other words, we show that this class of graph is a superset of many other popular classes of graphs.

We define a new class of graphs called average hereditary graphs. A graph $G = (V,E)$ is \emph{average hereditary} if for every induced subgraph $H$ of $G$, $d(H) \leq d(G)$. 
Note that this definition is equivalent to requiring that $d(H) \leq d(G)$ for all (not necessarily induced) subgraphs $H \subseteq G$. To see this, observe that if $d(H) \leq d(G)$ holds for all subgraphs, then in particular it holds for all induced subgraphs. Conversely, if the inequality holds for every induced subgraph $H' \subseteq G$, then for any subgraph $H \subseteq G$, there exists an induced subgraph $H' \subseteq G$ such that $V(H) = V(H')$ and $d(H) \leq d(H')$, implying $d(H) \leq d(G)$.


One way to think about such graphs is that the edge density is more ``uniformly'' distributed over the graph. In practice, most graphs arising in applications or standard graph-theoretic contexts belong to this class. The definition only restricts some extreme cases. Most graphs we see are average hereditary. Many popular classes of graphs belong to this class of graphs. For example, all regular graphs and all trees are average hereditary. The following propositions prove these claims.

\begin{proposition}\label{tree}
    If $G$ is a tree then $G$ is average hereditary.     
\end{proposition}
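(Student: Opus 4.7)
The plan is to exploit the simple edge-count identity for forests. If $G$ is a tree on $n$ vertices then $|E(G)| = n - 1$, so
\[
d(G) \;=\; \frac{2(n-1)}{n} \;=\; 2 - \frac{2}{n}.
\]
Now let $H \subseteq G$ be any induced subgraph, and write $k = |V(H)|$. The first key observation is that any induced subgraph of a tree is itself a forest (it contains no cycle, since $G$ has none), so in particular $|E(H)| \leq k - 1$.

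From here the bound on $d(H)$ falls out in one line, with only the edge cases $k = 0$ and $k = 1$ requiring a moment's care. If $V(H)$ is empty then by convention $d(H) = 0 \leq d(G)$; if $k = 1$ then $|E(H)| = 0$ so again $d(H) = 0$. Otherwise $k \geq 2$ and
\[
d(H) \;=\; \frac{2|E(H)|}{k} \;\leq\; \frac{2(k-1)}{k} \;=\; 2 - \frac{2}{k}.
\]
Since $k \leq n$ we have $\frac{2}{k} \geq \frac{2}{n}$, and therefore
\[
d(H) \;\leq\; 2 - \frac{2}{k} \;\leq\; 2 - \frac{2}{n} \;=\; d(G),
\]
as required.

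There is really no ``hard part'' here: the whole argument rests on the fact that removing vertices from a tree cannot turn it into something with more edges per vertex than the original, because every subgraph is still acyclic and hence has at most $k - 1$ edges on $k$ vertices. The only thing to be a little pedantic about is handling $k \in \{0,1\}$ so that the formula $2 - 2/k$ is not applied when it is undefined or misleading, which the case split above addresses.
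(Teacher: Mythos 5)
Your proof is correct and follows essentially the same route as the paper's: compute $d(G) = 2 - 2/n$ for a tree, note that every induced subgraph is a forest with at most $k-1$ edges on $k$ vertices, and conclude $d(H) \leq 2 - 2/k \leq 2 - 2/n = d(G)$. Your explicit handling of the $k \in \{0,1\}$ cases is slightly more careful than the paper's, but the argument is the same.
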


\begin{proof}
    Let $G = (V,E)$ be a tree so we have that $|E| = |V| - 1$. Therefore $d(G) =  \frac{2|E|}{|V|} = \frac{2  (|V|-1)}{|V|} = 2 - \frac{2}{|V|}$. Now if we induce a subgraph $H$ from $G$ then we can have two cases.
    First, if $H$ is empty then $d(H) \leq d(G)$. If $H$ is nonempty then we have that $H$ is a forest. So we have that $|E(H)| \leq |V(H)| - 1$, therefore $d(H) \leq 2 - \frac{2}{|V(H)|}$ and as both $V$ and $V(H)$ are finite, we have that $d(H) \leq d(G)$. 
\end{proof}

\begin{proposition}\label{reg}
    If $G$ is a $k$-regular graph then $G$ is average hereditary.     
\end{proposition}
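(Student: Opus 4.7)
The plan is to use the fact that in any induced subgraph, each vertex's degree can only drop (or stay the same) compared to its degree in the parent graph, since removing vertices only removes edges incident to the surviving vertex. Combined with the fact that $d(G) = k$ exactly when $G$ is $k$-regular, this should immediately bound the average degree of every induced subgraph by $k$.

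More concretely, I would first note that since $G$ is $k$-regular, $d(G) = \frac{2|E|}{|V|} = \frac{k|V|}{|V|} = k$. Then I would take an arbitrary induced subgraph $H \subseteq G$ and split into two cases. If $H$ is the null graph, then $d(H) = 0 \leq k = d(G)$ by the convention established in the preliminaries. Otherwise, for every vertex $v \in V(H)$, every edge incident to $v$ in $H$ is also incident to $v$ in $G$, so $d_H(v) \leq d_G(v) = k$. Summing over $V(H)$ gives $\sum_{v \in V(H)} d_H(v) \leq k \cdot |V(H)|$, and dividing by $|V(H)|$ yields $d(H) \leq k = d(G)$.

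There is really no obstacle here; the proof is a one-line degree-counting argument. The only thing to be careful about is the null-graph case, which is handled by the convention $d(\emptyset) = 0$ stated in Section~\ref{sec: prelim}. I would keep the presentation parallel in style to the proof of Proposition~\ref{tree}, so that readers can easily see the common pattern: compute $d(G)$, then bound $d(H)$ above by the same quantity using a structural property (here, the monotonicity of degree under vertex deletion).
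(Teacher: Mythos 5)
Your proof is correct and follows essentially the same approach as the paper: observe $d(G)=k$, note that degrees can only decrease in an induced subgraph, and conclude $d(H)\leq k = d(G)$. Your version is slightly more careful (explicit summation and the null-graph case), but the argument is the same.
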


\begin{proof}
    Let $G = (V, E)$ be a $k$-regular graph. So $d(G) = k$.
    Now if we remove any set of vertices from $G$ to induce a subgraph $H$ then every vertex in $H$ will have a degree of at most $k$. So the average degree of $H$ is at most $k$. Therefore $G$ will be average hereditary. 
    
\end{proof}

These propositions show that many popular types of graphs such as cycle, paths, star, claws, complete graphs, and Peterson graph, are all average hereditary, as we have that regular graphs and trees are average hereditary in general.
Figure~\ref{fig:exampl_ahg} shows some examples of average hereditary graphs.
\begin{figure}[h]
    \begin{center}
    \begin{tikzpicture}[scale=.75]
        \node[style={fill=black,circle}] (1) at (0,0){};
        \node[style={fill=black,circle}] (2) at (0,1.5){};
        \node[style={fill=black,circle}] (3) at (1.5,0){};
        \node[style={fill=black,circle}] (4) at (-1.5,0){};
        \node[style={fill=black,circle}] (5) at (0,-1.5){};
        \node[style={fill=black,circle}] (6) at (1.06,1.06){};
        \node[style={fill=black,circle}] (7) at (-1.061,-1.06){};
        \node[style={fill=black,circle}] (8) at (-1.06,1.06){};
        \node[style={fill=black,circle}] (9) at (1.06,-1.06){};
        \draw[black,very thick] (1)--(2) (1)--(3) (1)--(4) (1)--(5) (1)--(6) (1)--(7) (1)--(8) (1)--(9);
        
        \node[style={fill=black,circle}] (a) at (4,1.5){};
        \node[style={fill=black,circle}] (b) at (2.700961894,0.75){};
        \node[style={fill=black,circle}] (c) at (2.939339828,-1.06){};
        \node[style={fill=black,circle}] (d) at (5.06066,-1.06){};
        \node[style={fill=black,circle}] (e) at (5.299,0.75){};
        \draw[black,very thick] (a)--(b) (a)--(c) (a)--(d) (a)--(e) (b)--(c) (b)--(d) (b)--(e) (c)--(d) (c)--(e) (d)--(e);

        \node[style={fill=black,circle}] (1) at (7.5,0){};
        \node[style={fill=black,circle}] (2) at (7.5,1.5){};
        \node[style={fill=black,circle}] (3) at (8.66066,-1.06066){};
        \node[style={fill=black,circle}] (4) at (6.4393398,-1.06066){};
        \node[style={fill=black,circle}] (5) at (7.5,-2){};
        \draw[black,very thick] (1)--(2) (1)--(3) (1)--(4) (2)--(3) (2)--(4) (3)--(4) (3)--(5) (4)--(5);
        \draw (2) to [out=360,in=360,looseness=1.5] (5);

        \node[style={fill=black,circle}] (1) at (12,1){};
        \node[style={fill=black,circle}] (2) at (11.13397,0.5){};
        \node[style={fill=black,circle}] (3) at (11.29289,-0.7071){};
        \node[style={fill=black,circle}] (4) at (12.7071,-0.7071){};
        \node[style={fill=black,circle}] (5) at (12.866,0.5){}; 
        \node[style={fill=black,circle}] (6) at (12,2){};
        \node[style={fill=black,circle}] (7) at (13.732,1){};
        \node[style={fill=black,circle}] (8) at (13.4142,-1.414){};
        \node[style={fill=black,circle}] (9) at (10.26794919,1){};
        \node[style={fill=black,circle}] (10) at (10.585786,-1.414){};            
        \draw[black,very thick] (1)--(6) (1)--(4) (1)--(3) (2)--(9) (2)--(4) (2)--(5) (3)--(10) (4)--(8) (3)--(5) (5)--(7) (6)--(9) (6)--(7) (7)--(8) (8)--(10) (9)--(10);
        \end{tikzpicture} 
    \end{center}
        \caption{Example of average hereditary graphs}  
        \label{fig:exampl_ahg}     
\end{figure}
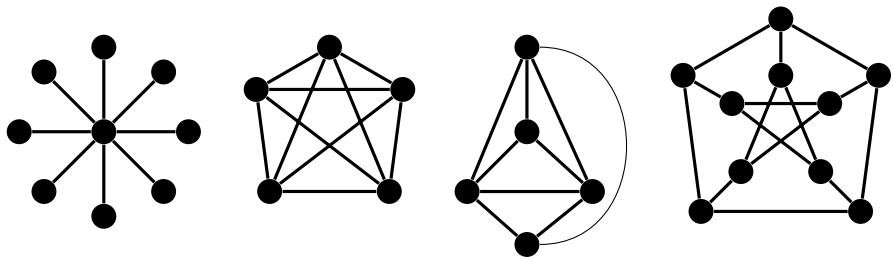

Some non-example of average hereditary graphs would be $K_7 \cup K_3$ or if we connect $P_{100}$ to a vertex of $K_5$. So these non-examples are graphs which has one region with a much higher edge density and one region with a much lower edge density. But these examples are much rarer, so the class of average hereditary graphs is much broader as the class is a superset of many popular classes of graphs. Figure~\ref{fig:example_non_ahg} shows a non-example of average hereditary graphs.
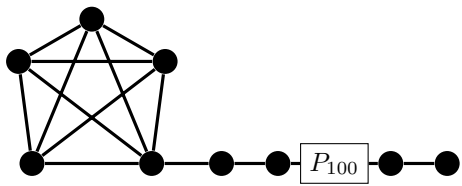
\begin{figure}[h]
    \vspace*{1cm}
    \begin{center}
    \begin{tikzpicture}[scale=.75]
    
        \node[style={fill=black,circle}] (1) at (4,1.5){};
        \node[style={fill=black,circle}] (2) at (2.700961894,0.75){};
        \node[style={fill=black,circle}] (3) at (2.939339828,-1.06){};
        \node[style={fill=black,circle}] (4) at (5.06066,-1.06){};
        \node[style={fill=black,circle}] (5) at (5.299,0.75){};
        \node[style={fill=black,circle}] (6) at (6.299,-1.06){};
        \node[style={fill=black,circle}] (7) at (7.299,-1.06){};
        \node[draw] (8) at (8.299,-1.06){$P_{100}$};
        \node[style={fill=black,circle}] (9) at (9.299,-1.06){};
        \node[style={fill=black,circle}] (10) at (10.299,-1.06){};
        \draw[black,very thick] (1)--(2) (1)--(3) (1)--(4) (1)--(5) (2)--(3) (2)--(4) (2)--(5) (3)--(4) (3)--(5) (4)--(5)
        (4)--(6) (6)--(7) (7)--(8) (8)--(9) (9)--(10);
        
    \end{tikzpicture} 
\end{center}        
\caption{Example of Non-average hereditary graphs}   
\label{fig:example_non_ahg}
\end{figure}

We can see that these average hereditary graphs are graphs which has more ``well'' distributed edge density. So we don't have that one region has a much higher edge density while another has a much lower edge density than that, as, if that were the case then removing the region with much lower edge density would give us a subgraph that has a higher average degree than the original graph.
Now we explore some interesting properties regarding average hereditary graphs.

\section{Computing Average Hereditary Property}\label{sec: computing}
At first glance, determining whether a graph is average hereditary appears to be a computationally challenging task. It is therefore of interest to find an equivalent condition that is efficiently computable.

Our next proposition gives that condition. Through which we will be able to decide the average hereditary property in polynomial time. 
The notion of maximum average degree has been an active area of research in graph theory. For a graph $G$, if $H$ is a subgraph of $G$ such that for all subgraphs $G'$ of $G$, the average degree of $G'$ is less than or equal to the average degree of $H$. For $G$ we denote $\MAD(G)$ as the average degree $H$. In other words for a graph $G$, $\MAD(G)$ is the average degree of the densest subgraph of $G$. The next proposition gives us a relation between $\MAD(G)$ and $d(G)$ for an average hereditary graph. 

\begin{proposition}\label{equivalent}
    A graph $G$ is average hereditary if and only if $d(G) = \MAD(G)$.
\end{proposition}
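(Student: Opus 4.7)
The statement should follow almost immediately from unpacking definitions, so my plan is to split the biconditional and handle each direction in a short paragraph.

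For the forward direction, I assume $G$ is average hereditary. By definition, $d(H) \le d(G)$ for every induced subgraph $H$ of $G$, and the paper has already observed that this is equivalent to $d(H) \le d(G)$ holding for all (not necessarily induced) subgraphs. Taking the supremum over all subgraphs gives $\MAD(G) \le d(G)$. For the reverse inequality, note that $G$ is itself a subgraph of $G$, so $\MAD(G) \ge d(G)$. Combining these gives $d(G) = \MAD(G)$.

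For the reverse direction, suppose $d(G) = \MAD(G)$. Let $H$ be any induced subgraph of $G$. Then $H$ is in particular a subgraph of $G$, so by definition of $\MAD$, $d(H) \le \MAD(G) = d(G)$. Since this holds for every induced subgraph, $G$ is average hereditary. (The null subgraph case is handled by the paper's convention $d(\emptyset) = 0 \le d(G)$.)

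I do not expect any genuine obstacle: the only subtlety worth flagging is that the definition of average hereditary is phrased in terms of induced subgraphs while $\MAD$ quantifies over all subgraphs, but the preliminary remark in Section~\ref{sec: avg} already reconciles these two formulations. So the proof can be written in just a few lines, citing that equivalence if needed.
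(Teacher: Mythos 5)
Your proof is correct and follows essentially the same route as the paper's: both directions reduce to the definition of $\MAD$ together with the fact that $G$ is a subgraph of itself, with the induced-versus-arbitrary subgraph equivalence handled by the remark in Section~\ref{sec: avg}. The only cosmetic difference is that you argue the forward direction directly while the paper phrases it as a short contradiction.
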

\begin{proof}
    Let $G = (V,E)$ be a graph, we know that $\MAD(G) \geq d(G)$. 
    Now suppose $G$ is average hereditary then from the definition $d(H) \leq d(G)$ for all subgraphs $H$ of $G$.
    Then if $\MAD(G) > d(G)$ we would have that there exists some subgraph $H$ of $G$ such that $d(H) > d(G)$ which is a contradiction, therefore $d(G) = \MAD(G)$.
    Conversely, suppose that $\MAD(G) = d(G)$ then from the definition of maximum average degree, $d(H) \leq \MAD(G) = d(G)$ for all subgraphs $H$ of $G$, therefore $G$ is average hereditary.
\end{proof}
In his 1984 paper Goldberg showed that $\MAD(G)$ can be computed in polynomial time \cite{12}.  
Goldberg reduced computing $\MAD(G)$ to a bunch of network flow problems giving an algorithm that runs in $O(M(|V|, 7 + |E|)\log |V|)$ time where $M(n, m)$ is the time required to find the minimum capacity cut in a network with $n$ vertices and $m$ edges \cite{12}.
The Max-flow min-cut theorem states that the minimum capacity of the cut in a network is equal to the maximum flow in a network \cite{1,2,8,14}. 
Now through this, we have several polynomial time algorithms to find the minimum capacity cut in a network such as Ford-Fulkerson algorithm \cite{8}, Edmonds-Karp algorithm \cite{13}, Dinitz's algorithm \cite{16}, Goldberg-Tarjan algorithm \cite{15} and many others. 
Furthermore, as the average degree of any graph can be computed in polynomial time we can use Goldberg's algorithm to compute if a given graph is average hereditary in polynomial time.

\section{Bound on Chromatic number}\label{sec: bound}
Graph coloring is famously $\NPC$.
Chromatic coloring is the optimization version of the Graph coloring problem. Chromatic coloring concerns with computing the chromatic number of a given graph. Unlike the usual graph coloring problem, which isn't specifically concerns with the chromatic number of the graph, the chromatic coloring problem isn't even known to be in \classNP. Chromatic coloring is itself \NPH. However many upper bounds exist for the chromatic number of a graph, aiming to reduce the complexity of the existing coloring algorithms. 
We introduce a new upper bound for the chromatic number of a graph in terms of the maximum average degree of the graph, that is $\chi(G) \leq \MAD(G)+1$. As a corollary of this, we obtain a relationship between the average degree and the chromatic number of an average hereditary graph. We compare our new upper bound on the chromatic number with previous bounds found in the literature and show that our bound is tighter than the previous bounds.
To prove this bound we use two Lemmas found in literature.

\begin{lemma}[\cite{2}]
    Every $k$-chromatic graph has a $k$-critical induced subgraph. 
\end{lemma}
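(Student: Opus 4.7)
The plan is to extract the critical subgraph by a minimality argument. Start with the $k$-chromatic graph $G$ and consider the collection $\mathcal{F}$ of all induced subgraphs $H \subseteq G$ with $\chi(H) = k$. This collection is nonempty since $G \in \mathcal{F}$, and because $G$ has finitely many vertices, $\mathcal{F}$ contains a member $H^\ast$ of minimum order.

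I would then argue that $H^\ast$ is $k$-critical. By construction $\chi(H^\ast) = k$, so it remains to show that $\chi(H^\ast - v) < k$ for every $v \in V(H^\ast)$. Fix such a $v$. The graph $H^\ast - v$ is again an induced subgraph of $G$ (induced subgraphs of induced subgraphs are induced subgraphs of the original) and it has strictly fewer vertices than $H^\ast$. If $\chi(H^\ast - v)$ were equal to $k$, then $H^\ast - v$ would be an element of $\mathcal{F}$ with fewer vertices than $H^\ast$, contradicting the minimality of $H^\ast$. Hence $\chi(H^\ast - v) \leq k - 1$, which is exactly the $k$-critical condition.

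There is no real obstacle here; the argument is a clean finite-descent/minimality proof, and the only subtlety worth stating explicitly is that being an induced subgraph is transitive, so removing one vertex at a time from $H^\ast$ keeps us inside the family of induced subgraphs of $G$. A one-line alternative would be to iteratively delete any vertex whose removal preserves the chromatic number, terminating (in at most $|V(G)|$ steps) at a $k$-critical induced subgraph; this is essentially the same argument presented algorithmically.
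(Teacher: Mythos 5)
Your argument is correct: taking an induced subgraph of minimum order among those with chromatic number $k$ and using transitivity of the induced-subgraph relation is the standard minimality proof, and it establishes exactly the vertex-deletion notion of $k$-criticality that the paper uses. Note that the paper itself supplies no proof for this statement --- it is quoted as a known lemma from the cited textbook --- and your finite-descent argument is essentially the proof found there, so there is nothing to reconcile; the only boundary case worth a passing word is $v$ being the last vertex of $H^\ast$, where $H^\ast - v$ is the null graph with chromatic number $0 < k$, which your argument handles implicitly.
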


\begin{lemma}[\cite{2}]
    For a critical graph $G$, $\chi(G) - 1 \leq \delta(G)$.  
\end{lemma}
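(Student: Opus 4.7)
The plan is to argue by contradiction using the defining property of a $k$-critical graph. Set $k = \chi(G)$, and suppose for contradiction that some vertex $v \in V(G)$ has $d_G(v) \leq k-2$. By $k$-criticality, $\chi(G-v) < k$, so $G-v$ admits a proper coloring $c$ from a palette of $k-1$ colors, say $\{1, 2, \dots, k-1\}$. I would then try to extend $c$ to all of $G$ by assigning to $v$ some color from that same palette that avoids all of its neighbors.

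The key step is a pigeonhole count: since $|N_G(v)| = d_G(v) \leq k-2$, the set of colors that $c$ uses on $N_G(v)$ has size at most $k-2$, which is strictly smaller than the $k-1$ colors available. Hence some color in $\{1, \dots, k-1\}$ is free at $v$, and assigning that color to $v$ produces a proper $(k-1)$-coloring of $G$, contradicting $\chi(G) = k$. It follows that every vertex $v$ of $G$ must satisfy $d_G(v) \geq k-1$, and therefore $\delta(G) \geq k-1 = \chi(G) - 1$.

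There is no real obstacle here: the argument is just the standard single-vertex extension-of-coloring trick combined with the one-vertex deletion built into the definition of criticality. The only point worth double-checking is that the coloring guaranteed by $\chi(G-v) < k$ genuinely draws from a fixed palette of size $k-1$, so that $v$ is free to pick any unused palette element; this is immediate from $\chi(G-v) \leq k-1$.
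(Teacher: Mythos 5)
Your argument is correct: it is the standard single-vertex deletion-and-extension proof, and the pigeonhole step (at most $k-2$ colors on $N_G(v)$ versus a palette of $k-1$) is handled properly under the paper's definition of criticality, which gives $\chi(G-v) \leq k-1$. The paper itself does not prove this lemma but cites it from the literature, and your proof is essentially the classical one given there, so there is nothing to add.
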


\begin{theorem}\label{bound}
    For a graph $G$, $\chi(G) \leq \MAD(G)+1$.
\end{theorem}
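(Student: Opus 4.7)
The plan is a short three-step chain that stitches the two preceding lemmas together with the elementary fact that the minimum degree of a graph is at most its average degree.

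First I would set $k = \chi(G)$ and invoke Lemma 1 to extract a $k$-critical induced subgraph $H \subseteq G$. This converts the problem from a statement about $G$ itself, where we have no local degree information, into a statement about a well-behaved substructure $H$ whose degrees we can control.

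Next I would apply Lemma 2 to $H$ to obtain $k - 1 = \chi(H) - 1 \leq \delta(H)$. Since $\delta(H) \leq d(H)$ for any nonempty graph (the minimum is at most the average), this gives $k - 1 \leq d(H)$. Finally, because $H$ is a subgraph of $G$, the definition of maximum average degree yields $d(H) \leq \MAD(G)$, and concatenating the inequalities produces $\chi(G) - 1 \leq \MAD(G)$, which is the claim.

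There is no real obstacle here; the only small care-point is confirming that $H$ is nonempty so that $\delta(H)$ and $d(H)$ are defined, which is immediate since $\chi(G) = k \geq 1$ forces $G$ and hence its $k$-critical subgraph to contain at least one vertex. I would also briefly note the immediate consequence, which the authors seem to want as a corollary: if $G$ is average hereditary, then by Proposition \ref{equivalent} we have $\MAD(G) = d(G)$, so $\chi(G) \leq d(G) + 1$.
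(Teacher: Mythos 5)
Your proposal is correct and follows exactly the paper's own argument: take a $\chi(G)$-critical induced subgraph $H$, apply the criticality lemma to get $\chi(G)-1 \leq \delta(H)$, and chain $\delta(H) \leq d(H) \leq \MAD(G)$. The only addition is your (harmless and reasonable) remark that $H$ is nonempty, which the paper leaves implicit.
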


\begin{proof}
    Let $G = (V,E)$ be a graph. 
    Let $H \subseteq G$ be a critical induced subgraph of $G$, then $\chi(H) = \chi(G)$. As $\chi(G) - 1 \leq \delta(H)$, then we have that $\chi(G) \leq \delta(H) + 1$. So we have that $\chi(G) \leq \delta(H) + 1 \leq d(H)+ 1 \leq \MAD(G)+ 1$. Therefore $\chi(G) \leq \MAD(G)+ 1$.
\end{proof}

\begin{corollary}\label{avgbound}
    If $G$ is an average hereditary graph then $\chi(G) \leq d(G)+1$.
\end{corollary}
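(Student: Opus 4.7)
The corollary is essentially an immediate consequence of Theorem~\ref{bound} combined with Proposition~\ref{equivalent}. The plan is simply to chain these two results together: start with the general upper bound on $\chi(G)$ in terms of $\MAD(G)$, then use the equivalent characterization of average hereditary graphs to replace $\MAD(G)$ with $d(G)$.

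More concretely, first I would invoke Theorem~\ref{bound} to obtain the inequality $\chi(G) \leq \MAD(G) + 1$, which holds for every graph without any additional hypothesis. Next, I would use the hypothesis that $G$ is average hereditary together with Proposition~\ref{equivalent}, which gives the equality $\MAD(G) = d(G)$. Substituting this equality into the inequality from Theorem~\ref{bound} yields $\chi(G) \leq d(G) + 1$, which is exactly the claim.

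There is essentially no obstacle here since both ingredients are already proved: the heavy lifting (the critical subgraph argument using $\chi(H) - 1 \leq \delta(H) \leq d(H) \leq \MAD(G)$) was done in Theorem~\ref{bound}, and the identification $d(G) = \MAD(G)$ for average hereditary graphs was established in Proposition~\ref{equivalent}. Thus the proof should be a short two- or three-line chain of inequalities and equalities, with no new constructions required.
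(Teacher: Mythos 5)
Your proposal is correct and follows exactly the paper's own argument: apply Theorem~\ref{bound} to get $\chi(G) \leq \MAD(G)+1$ and then substitute $\MAD(G) = d(G)$ from Proposition~\ref{equivalent}. Nothing is missing.
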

\begin{proof}
    Let $G = (V,E)$ be an average hereditary graph. Then from Proposition~\ref*{equivalent} we have that $d(G) = \MAD(G)$, and so $\chi(G) \leq \MAD(G)+1$ implies $\chi(G) \leq d(G)+1$.
\end{proof}

It is to note that as $\chi(G) \in \mathbb{N}$, $\chi(G) \leq \MAD(G)+1$ implies $ \chi(G) \leq \lfloor \MAD(G)+1 \rfloor$ and likewise $\chi(G) \leq d(G)+1$ implies $ \chi(G) \leq \lfloor d(G)+1 \rfloor$.
The following corollary uses this to show a case where we know the exact chromatic number of an average hereditary graph.

\begin{corollary}\label{lowerequpper}
    If $G$ is an average hereditary graph and $\left\lceil\frac{|V|}{|V|-d(G)} \right\rceil = \lfloor d(G)+1 \rfloor$, then $\chi(G) = \omega(G) = \lfloor d(G)+1 \rfloor$.
\end{corollary}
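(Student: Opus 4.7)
The plan is to sandwich $\chi(G)$ between two bounds that coincide under the hypothesis. Corollary~\ref{avgbound}, together with the observation that $\chi(G)$ is a natural number, already gives the upper bound $\chi(G) \leq \lfloor d(G)+1\rfloor$. Since $\omega(G) \leq \chi(G)$ always, it suffices to show that $\omega(G) \geq \left\lceil\frac{|V|}{|V|-d(G)}\right\rceil$; the hypothesis then forces $\omega(G) \leq \chi(G) \leq \lfloor d(G)+1\rfloor = \left\lceil\frac{|V|}{|V|-d(G)}\right\rceil \leq \omega(G)$, collapsing everything to equality.

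To establish the clique lower bound, I would invoke Tur\'an's theorem in its contrapositive form: if $\omega(G) \leq r$, then $G$ is $K_{r+1}$-free, so $|E(G)| \leq \left(1-\tfrac{1}{r}\right)\tfrac{|V|^2}{2}$. Rewriting this in terms of the average degree gives $d(G) = \tfrac{2|E(G)|}{|V|} \leq \left(1-\tfrac{1}{r}\right)|V|$, which rearranges to $\tfrac{|V|}{|V|-d(G)} \leq r$ (note $|V|-d(G) > 0$ unless $G$ is complete, in which case the bound is immediate, and the empty-graph case is vacuous). Since $r$ is an integer, this yields $\left\lceil\tfrac{|V|}{|V|-d(G)}\right\rceil \leq r$. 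Taking the contrapositive: $\omega(G) \geq \left\lceil\tfrac{|V|}{|V|-d(G)}\right\rceil$.

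Putting the pieces together, under the equality hypothesis $\left\lceil\tfrac{|V|}{|V|-d(G)}\right\rceil = \lfloor d(G)+1\rfloor$ we obtain
\[
\left\lceil\frac{|V|}{|V|-d(G)}\right\rceil \leq \omega(G) \leq \chi(G) \leq \lfloor d(G)+1\rfloor = \left\lceil\frac{|V|}{|V|-d(G)}\right\rceil,
\]
forcing $\chi(G) = \omega(G) = \lfloor d(G)+1\rfloor$, as required.

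The only delicate point is verifying the Tur\'an-derived clique bound cleanly, particularly the rearrangement step and the handling of the boundary case $d(G) = |V|-1$ (the complete graph), where the denominator $|V|-d(G) = 1$ reduces the statement to the trivial identity $|V| = \omega(K_{|V|})$. Once that is dispatched, the rest is a two-line sandwich argument that uses Corollary~\ref{avgbound} as a black box and does not require any further appeal to the average hereditary structure beyond what Corollary~\ref{avgbound} already encodes.
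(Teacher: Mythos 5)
Your proposal is correct and follows essentially the same sandwich argument as the paper: $\left\lceil\frac{|V|}{|V|-d(G)}\right\rceil \leq \omega(G) \leq \chi(G) \leq \lfloor d(G)+1 \rfloor$, with the upper bound coming from Corollary~\ref{avgbound} and integrality, and the hypothesis collapsing the chain to equalities. The only difference is that the paper quotes the lower bound $\chi(G)\geq\omega(G)\geq\frac{|V|^2}{|V|^2-2|E|}=\frac{|V|}{|V|-d(G)}$ from the literature, whereas you re-derive it from Tur\'an's theorem, which is a valid, self-contained substitute for the citation.
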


\begin{proof}
    Let $G = (V,E)$ be an average hereditary graph. From \cite{4} we know that $\chi(G) \geq\omega(G)\geq\frac{|V|^2}{|V|^2-2|E|} = \frac{|V|}{|V| - d(G)}$.
    
    From corollary~\ref{avgbound} we have that $\chi(G) \leq d(G)+1$. As $\chi(G),\omega(G) \in \mathbb{N}$, then $\chi(G) \leq d(G)+1$ implies $\chi(G) \leq \lfloor d(G)+1 \rfloor$ and $\chi(G) \geq\omega(G) \geq \frac{|V|}{|V| - d(G)}$ implies $\chi(G) \geq \omega(G) \geq\left\lceil\frac{|V|}{|V|-d(G)} \right\rceil$. So 
    $$ \left\lceil\frac{|V|}{|V|-d(G)} \right\rceil \leq\omega(G) \leq \chi(G) \leq \lfloor d(G)+1 \rfloor$$ 
    Therefore if $\left\lceil\frac{|V|}{|V|-d(G)} \right\rceil = \lfloor d(G)+1 \rfloor$, then $\chi(G) = \omega(G) = \left\lceil\frac{|V|}{|V|-d(G)} \right\rceil = \lfloor d(G)+1 \rfloor$.
\end{proof}
We now show that our new bound of $\chi(G) \leq \MAD(G)+1$ is an improvement on previous bounds, that is to say, that in most cases our bound will be tighter than previous bounds found in the literature. We compare our bound with three previous bounds found in the literature showing that our bound is in improvement.

It is well known that in general $\chi(G) \leq \Delta(G) + 1$. We compare our bound with this general case bound and show that our bound is an improvement.

\begin{proposition}\label{improv1}
    $\lfloor \MAD(G)+1 \rfloor \leq \Delta(G) + 1$
\end{proposition}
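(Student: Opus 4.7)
The plan is to bound $\MAD(G)$ by $\Delta(G)$ directly, and then use the fact that $\Delta(G)+1$ is already an integer to pass the floor through. First I would recall that $\MAD(G) = d(H^*)$ for some subgraph $H^* \subseteq G$ realizing the maximum average degree. In $H^*$, every vertex has degree at most $\Delta(H^*)$, and since $H^*$ is a subgraph of $G$, the degree of any vertex in $H^*$ cannot exceed its degree in $G$, which is at most $\Delta(G)$. Hence $d(H^*) \leq \Delta(H^*) \leq \Delta(G)$.

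Once $\MAD(G) \leq \Delta(G)$ is established, adding $1$ to both sides gives $\MAD(G)+1 \leq \Delta(G)+1$. Because $\Delta(G)+1 \in \mathbb{N}$, applying the floor function on the left preserves the inequality: $\lfloor \MAD(G)+1 \rfloor \leq \Delta(G)+1$.

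I expect the proof to be very short, with essentially no obstacle: the only subtlety is making sure to argue that the average degree of any subgraph is bounded by the maximum degree of that subgraph (since an average can never exceed a maximum), and that the maximum degree cannot increase when passing to a subgraph. Both of these are immediate from the definitions, so the proposition reduces to a one-line chain of inequalities.
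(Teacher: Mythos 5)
Your argument is correct and follows essentially the same route as the paper: establish $\MAD(G) \leq \Delta(G)$, add $1$, and pass the floor through using integrality of $\Delta(G)+1$. In fact you supply slightly more detail than the paper, which simply asserts $\MAD(G) \leq \Delta(G)$ without the subgraph argument you spell out.
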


\begin{proof}
    For a graph $G = (V,E)$ the maximum average degree $\MAD(G)$ is less than or equal to the maximum degree $\Delta(G)$. So
    $$\MAD(G) \leq \Delta(G) \implies \MAD(G)+1\leq \Delta(G) + 1 \implies \lfloor \MAD(G)+1 \rfloor \leq \Delta(G) + 1.$$
\end{proof}

Therefore this new bound is an improvement on the general case bound of $\chi(G) \leq \Delta(G) + 1$.

Now we compare it to the special case bound given by Brooks in 1941. Brooks showed that $\chi(G) \leq \Delta(G)$ if $G$ is neither complete nor an odd cycle \cite{7}. We show that our bound is an improvement on Brooks's bound when $G$ is not regular.

\begin{proposition}\label{improv2}
    $\lfloor \MAD(G)+1 \rfloor \leq \Delta(G)$ if not all connected component of $G$ containing a subgraph of maximum average degree are regular. 
\end{proposition}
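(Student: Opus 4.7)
The plan is to prove the contrapositive. Since $\MAD(G) \le \Delta(G)$ always holds and $\Delta(G)$ is an integer, the inequality $\lfloor \MAD(G)+1\rfloor \le \Delta(G)$ can only fail when $\MAD(G) = \Delta(G)$. So it suffices to show that in this extremal case every connected component of $G$ that contains a subgraph of maximum average degree is regular.

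Pick any $H \subseteq G$ with $d(H) = \MAD(G)$. I first argue $H$ may be assumed connected and induced. Taking the induced subgraph on $V(H)$ only increases $d(H)$, so being induced is free. For connectedness, $d(H)$ is the weighted average of the average degrees of the connected components of $H$, so at least one component of $H$ realises the maximum and is itself a densest subgraph; it also sits inside a single connected component of $G$, which is what the proposition's hypothesis refers to.

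Now each vertex $v \in V(H)$ satisfies $d_H(v) \le \Delta(G)$, and these numbers average to $\Delta(G)$, so they must all equal $\Delta(G)$. Combined with $d_H(v) \le d_G(v) \le \Delta(G)$, this forces $d_G(v) = \Delta(G)$ for every $v \in V(H)$, and hence no vertex of $H$ has a neighbor outside $V(H)$ in $G$. Because $H$ is connected, the connected component $C$ of $G$ containing $H$ must then satisfy $V(C) = V(H)$, and every vertex of $C$ has degree $\Delta(G)$ in $G$. So $C$ is $\Delta(G)$-regular, which is exactly the contrapositive of the proposition.

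I do not anticipate a substantive obstacle; the only step requiring real care is the reduction to a connected densest subgraph, since the proposition speaks of a component containing such a subgraph and, without connectedness, one cannot speak unambiguously of \emph{the} component in question. Once that is cleanly stated, the heart of the argument is just the standard observation that numbers bounded above by $M$ with mean equal to $M$ are all equal to $M$.
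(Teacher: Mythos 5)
Your proof is correct and follows essentially the same route as the paper's: both show that $\MAD(G) = \Delta(G)$ forces every vertex of a densest subgraph to have degree $\Delta(G)$ in $G$, so that subgraph has no edges to the rest of its component, the component coincides with it and is regular, and the conclusion then follows from the integrality of $\Delta(G)$. The differences are only presentational: you argue the contrapositive with an explicitly connected, induced densest subgraph, while the paper argues by contradiction after a without-loss-of-generality reduction to a unique component containing the densest subgraph.
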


\begin{proof}
Let $G = (V,E)$ be a graph such that there is some non-regular connected component of $G$ containing a subgraph of maximum average degree. 
Let $M$ be the subgraph of $G$ with the maximum average degree so $\MAD(G) = d(M)$ and Let $H$ be the connected component of $G$ containing $M$.
Without loss of generality suppose that there is a unique component $H$ of $G$ such that $H$ contains a graph with the maximum average degree, else if there are multiple such components of $G$ then we choose the component with the greatest maximum degree and obtain a graph $G'$ by deleting all other such components of $G$, this way $\Delta(G') = \Delta(G)$ and $\MAD(G') = \MAD(G)$.

We have that $\MAD(G) \leq \Delta(G)$, now suppose $\MAD(G) = \Delta(G)$. We know that $d(M) \leq \Delta(H) \leq \Delta(G)$.
So $\MAD(G) = \Delta(G)$ implies $\MAD(G) = \Delta(H)$. Now if  $\MAD(G) = \Delta(H)$, then as $\MAD(G) \leq \Delta(M)$, we have that $d(M) = \MAD(G) = \Delta(M) = \Delta(H)$, so all the vertices in $M$ have the same degree in $M$ therefore $M$ would be regular. Now suppose that $U = H - M \neq \emptyset$, then as $H$ is connected $[M,U]$ is nonempty. 
Then there is at least one vertex in $M$ that has one of its incident edges in $[M,U]$, but as we have that all vertices in $M$ have degree equal to $\Delta(H)$ in $M$ then $\MAD(G) < \Delta(H)$ which is a contradiction. So we have that $U = H - M = \emptyset$. Therefore $H$ is regular, which is a contradiction as the connected component of $G$ containing $M$ is not regular, therefore $\MAD(G) < \Delta(G)$.
As $\Delta(G) \in \mathbb{N}$, then $\lfloor \MAD(G) \rfloor < \Delta(G)$. As $\lfloor \MAD(G)\rfloor \in \mathbb{N}$, $\lfloor \MAD(G)\rfloor < \Delta(G) \implies \lfloor \MAD(G) \rfloor + 1 \leq \Delta(G)$.
\end{proof}

Therefore when not all connected components of $G$ containing a subgraph of maximum average degree are regular our bound is an improvement on the bound given by Brooks. 

Mar\'ia Soto, André Rossi and Marc Sevaux showed that $\chi(G) \leq \left\lfloor \frac{3+\sqrt{9+8(|E|-|V|)}}{2} \right\rfloor$ if $G$ is simple and connected \cite{3}. They compared their bounds with some previous bounds found in literature and showed that it was an improvement on those bounds \cite{3}.
We now compare our bound with this bound and show that our bound is an improvement on their bound as well. We first prove a lemma which we will use to prove that our bound is an improvement on the bound by Mar\'ia Soto, André Rossi and Marc Sevaux.

\begin{lemma}\label{lemImpov}
    For a simple and connected graph $G$, $\lfloor d(G)+1 \rfloor \leq \left\lfloor \frac{3+\sqrt{9+8(|E|-|V|)}}{2} \right\rfloor$.
\end{lemma}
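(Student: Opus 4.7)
The plan is to set $n = |V|$, $m = |E|$, and $k := \lfloor d(G)+1 \rfloor$, and then prove the equivalent statement $k \le \lfloor (3 + \sqrt{9 + 8(m-n)})/2 \rfloor$. Since $k$ is an integer, this reduces to the real inequality
\[
\sqrt{9 + 8(m - n)} \;\ge\; 2k - 3,
\]
which I would try to derive from the defining bound $2m = n\, d(G) \ge n(k - 1)$. Note that connectedness gives $m \ge n - 1$, so the radicand is nonnegative and the square root is real throughout.

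First I would dispose of the small values of $k$ separately, since the algebraic manipulation in the main step requires $2k - 3 \ge 0$. For $k = 1$ the right side of the displayed inequality is negative, so it is vacuous. For $k = 2$ it reduces to $9 + 8(m - n) \ge 1$, i.e.\ $m \ge n - 1$, which is again immediate from connectedness.

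For the main case $k \ge 3$, I would square both sides to obtain the equivalent inequality $m \ge n + k(k - 3)/2$. Using the lower bound $m \ge n(k - 1)/2$ coming from $k \le d(G) + 1$, it suffices to prove $n(k - 1)/2 \ge n + k(k - 3)/2$, which rearranges to
\[
(n - k)(k - 3) \ge 0.
\]
Since $G$ is simple, $d(G) \le n - 1$, so $k \le n$; combined with $k \ge 3$, both factors are nonnegative and the inequality closes.

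The main obstacle, and the reason the statement is phrased with floors on both sides, is that the unfloored inequality $d(G) + 1 \le (3 + \sqrt{9 + 8(m - n)})/2$ is \emph{false} in general (for example, the path $P_4$ gives $2.5$ versus $2$). A direct real-valued comparison therefore cannot work; the argument must route through the integer $k$ and use the strengthened lower bound on $m$ that the integrality of $k$ provides, which is exactly what makes the reduction to $(n - k)(k - 3) \ge 0$ go through.
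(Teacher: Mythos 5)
Your proof is correct, and it takes a genuinely different route from the paper's. The paper splits on the edge count: for $|V| \le |E| \le \frac{|V|(|V|-1)}{2}$ it proves the stronger \emph{unfloored} inequality $d(G)+1 \le \frac{3+\sqrt{9+8(|E|-|V|)}}{2}$ by treating it as a quadratic in $|E|$ and checking that this range of $|E|$ lies between the two roots, and it handles the remaining tree case $|E| = |V|-1$ (where, exactly as you observe with $P_4$, the unfloored inequality fails) separately by computing that both floored quantities equal $2$. You never touch the unfloored statement: you pass immediately to the integer $k = \lfloor d(G)+1\rfloor$, reduce the claim to $\sqrt{9+8(m-n)} \ge 2k-3$, dispose of $k\in\{1,2\}$ using only $m \ge n-1$ from connectedness, and for $k \ge 3$ square and combine the two elementary facts $m \ge n(k-1)/2$ (from $d(G) \ge k-1$) and $k \le n$ (from simplicity) to reach $(n-k)(k-3) \ge 0$; all of these steps check out, including the validity of squaring since $2k-3>0$ in that case. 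What your route buys is a shorter, purely arithmetic argument with no root-bracketing computation and no special treatment of trees beyond the trivial $k=2$ case, and it makes explicit why the floors are essential; what the paper's route buys is the slightly stronger intermediate fact that the unfloored bound already holds whenever $|E| \ge |V|$, though that extra strength is not used elsewhere in the paper.
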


\begin{proof}
    Let $G = (V,E)$ be a simple and connected graph, then $|V|-1 \leq |E| \leq \frac{|V|(|V|-1)}{2}$. If $|V| = 0$ or $|V|=1$ then $d(G) = 0$ so this is trivial case. Now we consider when $|V| > 1$.
    We first show that $d(G)+1 = \frac{2|E|}{|V|} + 1 \leq  \frac{3+\sqrt{9+8(|E|-|V|)}}{2}$ 

    We show that the inequality holds for $|V| \leq |E| \leq \frac{|V|(|V|-1)}{2}$. We will then show for $|E| = |V|-1$; $\lfloor d(G)+1 \rfloor = \left\lfloor \frac{3+\sqrt{9+8(|E|-|V|)}}{2} \right\rfloor =2$ where $|V| \geq 2$.
    $$\frac{2|E|}{|V|} + 1  \leq  \frac{3+\sqrt{9+8(|E|-|V|)}}{2}$$
    $$2|E|^2 - (|V|+|V|^2)|E| + (|V|^3-|V|^2) \leq 0$$
   
    $$\iff \frac{|V|+|V|^2-|V|\sqrt{9-6|V|+|V|^2}}{4} \leq |E| \leq \frac{|V|+|V|^2+|V|\sqrt{9-6|V|+|V|^2}}{4}$$
    We have that $|V| \leq |E| \leq \frac{|V|(|V|-1)}{2}$, so first we show that 
    $\frac{|V|(|V|-1)}{2} \leq \frac{|V|+|V|^2+|V|\sqrt{9-6|V|+|V|^2}}{4}$
    $$\frac{|V|(|V|-1)}{2} \leq \frac{|V|+|V|^2+|V|\sqrt{9-6|V|+|V|^2}}{4} \iff 9-6|V|+|V|^2 \leq 9-6|V|+|V|^2 \iff 0 \leq 0$$
   
    Now we show that $\frac{|V|+|V|^2-|V|\sqrt{9-6|V|+|V|^2}}{4} \leq |V|$
    $$\frac{|V|+|V|^2-|V|\sqrt{9-6|V|+|V|^2}}{4} \leq |V| \iff 9-6|V|+|V|^2 \leq 9-6|V|+|V|^2 \iff 0 \leq 0$$
   
    So we have that for $|V| \leq |E| \leq \frac{|V|(|V|-1)}{2}$, \;$d(G)+1 = \frac{2|E|}{|V|} + 1 \leq  \frac{3+\sqrt{9+8(|E|-|V|)}}{2}$.
    
    Now we will then show for $|E| = |V|-1$, $\lfloor d(G)+1 \rfloor = \left\lfloor \frac{3+\sqrt{9+8(|E|-|V|)}}{2} \right\rfloor = 2$.
    
    For \;$|E| = |V|-1$, $\frac{3+\sqrt{9+8(|E|-|V|)}}{2} = \frac{3+\sqrt{9+8(|V|-1-|V|)}}{2} = 2$ \;and 
    \;$|E| = |V|-1$, $\lfloor d(G)+1 \rfloor = \lfloor \frac{2|E|}{|V|} \rfloor + 1  = \lfloor \frac{2|V| - 2}{|V|} \rfloor + 1$.
    as $|V| \in \mathbb{Z}^+$ and $|V|$ is finite and $|V| \geq 2$ then $\lfloor \frac{2|V| - 2}{|V|} \rfloor = 1$
    so $\lfloor d(G)+1 \rfloor = 2$.
  
    Therefore for $|V|-1 \leq |E| \leq \frac{|V|(|V|-1)}{2}$,  $\lfloor d(G)+1 \rfloor \leq \left\lfloor \frac{3+\sqrt{9+8(|E|-|V|)}}{2} \right\rfloor$.
    
\end{proof}

\begin{proposition}\label{improv3}
    $\lfloor \MAD(G)+1 \rfloor \leq \left\lfloor \frac{3+\sqrt{9+8(|E|-|V|)}}{2} \right\rfloor$ if $G$ is simple and connected.
\end{proposition}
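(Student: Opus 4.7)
The plan is to reduce the claim to Lemma~\ref{lemImpov} applied to a suitably chosen connected subgraph of $G$. Choose a subgraph $H \subseteq G$ achieving the maximum average degree, so $\MAD(G) = d(H)$, and write $n' = |V(H)|$, $m' = |E(H)|$. If $H$ happens to be disconnected with components $H_1, \dots, H_k$ of sizes $n_i = |V(H_i)|$ and $m_i = |E(H_i)|$, then
$$d(H) = \frac{2\sum_i m_i}{\sum_i n_i} = \frac{\sum_i n_i\, d(H_i)}{\sum_i n_i}$$
is a weighted average of the $d(H_i)$, so some component satisfies $d(H_i) \geq d(H) = \MAD(G)$; combined with $d(H_i) \leq \MAD(G)$ this forces equality, and I may replace $H$ by that component and assume $H$ is connected. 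Since $G$ is simple, so is $H$, and Lemma~\ref{lemImpov} applied to $H$ gives
$$\lfloor \MAD(G) + 1 \rfloor = \lfloor d(H) + 1 \rfloor \leq \left\lfloor \frac{3 + \sqrt{9 + 8(m' - n')}}{2} \right\rfloor.$$

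The key remaining step is to show $m' - n' \leq |E| - |V|$. Because $H$ is connected, it admits a spanning tree $T_H$ with $n' - 1$ edges. Since $T_H$ is acyclic in $G$ and $G$ is connected, the standard matroid/greedy argument lets me extend $T_H$ to a spanning tree $T$ of $G$, which then has $|V| - 1$ edges. Every edge of $H$ not in $T_H$ is an edge of $G$ not in $T$, so the count of non-tree edges satisfies $m' - (n' - 1) \leq |E| - (|V| - 1)$, which rearranges to $m' - n' \leq |E| - |V|$.

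Finally, since the map $x \mapsto \frac{3 + \sqrt{9 + 8x}}{2}$ is monotone increasing on its domain, the inequality $m' - n' \leq |E| - |V|$ yields
$$\left\lfloor \frac{3 + \sqrt{9 + 8(m' - n')}}{2} \right\rfloor \leq \left\lfloor \frac{3 + \sqrt{9 + 8(|E| - |V|)}}{2} \right\rfloor,$$
which, chained with the bound from Lemma~\ref{lemImpov} above, completes the proof. The main obstacle I anticipate is the scenario where the densest subgraph is a small, highly connected core only loosely attached to a sparse periphery of $G$: the argument sidesteps this by first passing to a densest \emph{connected} component of $H$, after which the spanning-tree-extension step — which uses connectivity of both $H$ and $G$ in an essential way — delivers the cyclomatic inequality $m' - n' \leq |E| - |V|$ needed to transfer the lemma's bound from $H$ to $G$.
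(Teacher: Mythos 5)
Your proof is correct, but it reaches the key inequality by a genuinely different route than the paper. The paper also reduces to Lemma~\ref{lemImpov} applied to a densest subgraph $M$ and then proves $|E(M)|-|V(M)|\leq|E|-|V|$, but it does so by a case analysis on $G-M$: it splits $G-M$ into its single-vertex components $S$ and the rest $U$, counts the cut edges $[S,M\cup U]$ and $[U,M]$, and derives a contradiction from $|V(U)|>|E(U)|+|[U,M]|$ using connectivity of $G$. You instead obtain the same cyclomatic inequality $|E(H)|-|V(H)|\leq|E|-|V|$ in one step by extending a spanning tree of $H$ to a spanning tree of $G$ and comparing non-tree edges (and your observation that an edge of $E(H)\setminus E(T_H)$ cannot lie in $T$, since it would close a cycle with $T_H\subseteq T$, is exactly the point that makes this work). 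Your version is cleaner and, importantly, you also fill a gap the paper leaves implicit: Lemma~\ref{lemImpov} is stated only for \emph{connected} graphs, and the densest subgraph need not be connected; your weighted-average argument showing that some connected component of $H$ already attains $\MAD(G)$ justifies the application of the lemma, whereas the paper applies it to $M$ without addressing connectivity. What the paper's approach buys is a self-contained counting argument phrased directly in terms of edge cuts, consistent with its earlier notation; what yours buys is brevity, the standard matroid/forest-extension machinery, and an explicit treatment of the disconnected-densest-subgraph case.
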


\begin{proof}
    Let $G = (V,E)$ be a simple and connected graph, and let $M$ be the subgraph of $G$ with the maximum average degree.
    We first show that following inequality holds:
    \begin{equation}\label{eqn:inequality_prop_floor}
        \left\lfloor \frac{3+\sqrt{9+8(|E(M)|-|V(M)|)}}{2} \right\rfloor \leq \left\lfloor \frac{3+\sqrt{9+8(|E|-|V|)}}{2} \right\rfloor.
    \end{equation}
    We first prove this by showing that this inequality holds even when we remove the floor functions. That is, 
    \begin{equation}\label{eqn:inequality_prop}
    \frac{3+\sqrt{9+8(|E(M)|-|V(M)|)}}{2} \leq \frac{3+\sqrt{9+8(|E|-|V|)}}{2}.
    \end{equation}
    
    From (\ref{eqn:inequality_prop}) we have that, $|E(M)|-|V(M)| \leq |E|-|V|$ that is $|V|-|V(M)| \leq |E|-|E(M)|$. 
    Let $H = G-M$ then $|V|-|V(M)| = |V(H)|$ and $|E|-|E(M)| = |E(H)| + |[H,M]|$. Now if $M = G$ then $|V|-|V(M)| = |E|-|E(M)| = 0$. Now if $M \neq G$, so $|V(H)| > 0$ and  $|E(H)| + |[H,M]| > 0$. Then (\ref{eqn:inequality_prop}) gives $|V(H)| \leq |E(H)| + |[H,M]|$. Let $S$ be the union of all single vertex components of $H$, and let $U = H - S$ so $|V(H)| = |V(U)| + |V(S)|$ and $|E(H)| + |[H,M]| = |E(U)| + |[U,M \cup S]| + |[S,M \cup U]|$. As $S$ is the union of single vertex components of $H$ and $G$ is connected we have that $|V(S)| = |[S,M \cup U]|$.
    So from (\ref{eqn:inequality_prop}) we get $|V(H)| \leq |E(H)| + |[H,M]| \iff |V(U)| \leq |E(U)| + |[U,M]|$.

    Now suppose that $|V(U)| > |E(U)| + |[U,M]|$. First, as $G$ is connected and $U$ contains no single vertex components then $E(U) \geq |V(U)| - 1$. Next as $G$ is connected $|[U,M]| \geq 1$, so $|V(U)| > |E(U)| + |[U,M]| \implies |V(U)| > |E(U)| + 1$ which is equivalent to $|V(U)| - 1 > |E(U)|$ which is a contradiction.  So we have that $|V(U)| \leq |E(U)| + |[U,M]|$ and therefore (\ref{eqn:inequality_prop}) holds which implies (\ref{eqn:inequality_prop_floor}).
    
    Now from Lemma~\ref*{lemImpov} we have that $\lfloor d(M)+1 \rfloor \leq \left\lfloor \frac{3+\sqrt{9+8(|E(M)|-|V(M)|)}}{2} \right\rfloor$, and as $d(M) = \MAD(G)$ we have that $\lfloor \MAD(G)+1 \rfloor \leq \left\lfloor \frac{3+\sqrt{9+8(|E(M)|-|V(M)|)}}{2} \right\rfloor \leq \left\lfloor \frac{3+\sqrt{9+8(|E|-|V|)}}{2} \right\rfloor$.
\end{proof}

Therefore our new bound of $\lfloor \MAD(G)+1 \rfloor$ is an improvement on the special case bound of $\chi(G) \leq \left\lfloor \frac{3+\sqrt{9+8(|E|-|V|)}}{2} \right\rfloor$ if $G$ is simple and connected. 

So we have a new bound on the chromatic number of a graph which we showed is an improvement on previous bounds. The bound is also computable in polynomial time as $\MAD(G)$ can be computed in polynomial time as discussed in section~\ref*{sec: computing}. The bound is also often very close to the actual value of the chromatic number.
Through this bound we found a relationship between the chromatic number and the average degree of an average hereditary graph, we also found a case for which we know the exact chromatic number of an average hereditary graph, this makes the class of average hereditary graphs interesting to us.

\section{NP-Hardness of Graph 3-Coloring in Average Hereditary Graphs}\label{sec: hardness}
In the previous section we explored some inequalities regarding the chromatic number of average hereditary graphs. We now explore the effects on the complexity of the graph \Prob{3-coloring} problem when we restrict the input to the class of average hereditary graphs. We show that graph \Prob{3-coloring} remains NP-Hard when the input is restricted to average hereditary graphs. This result is interesting as it gives us insight on the effect of input constraint on the graph \Prob{3-coloring} problem. This result can be used to study the complexity dichotomy of graph \Prob{3-coloring}, it can also be used to study the notion of boundary classes and limit class \cite{10,9}. We use the reduction given by Karp 1972 \cite{6} to prove our result, the proof then follows simply from the reduction.

\begin{lemma}[\cite{6}]
    \Prob{3-sat} $\leq_P$ \Prob{3-coloring}.
\end{lemma}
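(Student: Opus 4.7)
The plan is to exhibit a polynomial-time reduction that takes a \Prob{3-sat} formula $\phi$ with variables $x_1, \dots, x_n$ and clauses $C_1, \dots, C_m$ and produces a graph $G_\phi$ such that $\phi$ is satisfiable if and only if $\chi(G_\phi) \leq 3$. The construction will be the now-standard literal-and-clause gadget construction, and the whole argument boils down to verifying that each gadget behaves as advertised.

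First I would fix a \emph{palette triangle}: three mutually adjacent vertices $T$, $F$, $B$. In any proper $3$-coloring of $G_\phi$ these three vertices receive three distinct colors, and up to renaming we may call their colors \textsf{True}, \textsf{False}, and \textsf{Base}. Next, for each variable $x_i$ I would introduce two \emph{literal vertices} $u_i$ and $\bar u_i$ and add the edges $\{u_i, \bar u_i\}$, $\{u_i, B\}$, $\{\bar u_i, B\}$. Together with $B$ these three vertices form a triangle, so in any $3$-coloring exactly one of $u_i, \bar u_i$ receives color \textsf{True} and the other \textsf{False}; this is how a coloring encodes a truth assignment.

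For each clause $C_j = \ell_{j,1} \lor \ell_{j,2} \lor \ell_{j,3}$ I would attach a \emph{clause gadget} on the three corresponding literal vertices. The gadget I have in mind is the classical one built from two copies of a $6$-vertex ``OR widget'': given two input vertices $a$ and $b$, a small subgraph with one distinguished output vertex $c$ has the property that, conditional on $a,b,c \in \{\textsf{True},\textsf{False}\}$, the widget admits a proper $3$-coloring iff $c = \textsf{True} \lor (a \lor b)$, i.e.\ $c$ can be forced \textsf{False} only when both $a$ and $b$ are \textsf{False}. Chaining two such widgets yields a $3$-input OR whose output I then identify with a vertex joined to both $F$ and $B$, forcing that output to take color \textsf{True}. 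Consequently the gadget extends to a proper $3$-coloring precisely when at least one of $\ell_{j,1}, \ell_{j,2}, \ell_{j,3}$ was colored \textsf{True}.

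The correctness then decomposes into the two standard directions: a satisfying assignment of $\phi$ yields a $3$-coloring by coloring each literal \textsf{True}/\textsf{False} accordingly and extending inside each clause gadget, while a $3$-coloring of $G_\phi$ induces a truth assignment from the literal vertices, which must satisfy every clause since each clause gadget is $3$-colorable only in that case. Finally, $G_\phi$ has $O(n+m)$ vertices and edges and is clearly constructible in polynomial time. I expect the only delicate step to be the verification of the OR widget's truth table, specifically that every possible coloring of its input/output vertices extends, or fails to extend, exactly as needed; this is a small finite case check but is the only place where something nontrivial happens, and I would carry it out explicitly to make the reduction airtight.
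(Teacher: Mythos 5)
Your proposal is correct and is essentially the same construction the paper relies on: the paper gives no proof of this lemma at all (it is cited to Karp), and your palette-triangle, literal-triangle, and chained OR-widget reduction is exactly that standard reduction --- indeed it reproduces the counts $6C+L+3$ vertices and $12C+\tfrac{3}{2}L+3$ edges that the paper later uses in Theorem~\ref{nphardclass}. The only point to state carefully when you do the deferred finite check is the precise widget property: the output is \emph{forced} to the \textsf{False} color when both inputs are \textsf{False}, and \emph{can} be colored \textsf{True} (not is forced to be) whenever at least one input is \textsf{True}, which is exactly what the two correctness directions need.
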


\begin{theorem}\label{nphardclass}
    Graph \Prob{3-coloring} is $\NPH$ for the class of average hereditary graphs.
\end{theorem}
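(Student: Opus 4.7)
The plan is to use Karp's reduction \cite{6} from \Prob{3-sat} to \Prob{3-coloring} directly, and to argue that the graph it produces already belongs to the class of average hereditary graphs. Karp's construction maps a 3-\Prob{sat} instance $\phi$ with $n$ variables and $m$ clauses, in polynomial time, to a graph $G(\phi)$ such that $G(\phi)$ is 3-colorable iff $\phi$ is satisfiable. Since \Prob{3-sat} is \NPH\ and the map is polynomial, it suffices to check that $G(\phi)$ is average hereditary; then the very same reduction witnesses that \Prob{3-coloring} remains \NPH\ on this restricted input class.

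By Proposition~\ref*{equivalent}, verifying that $G(\phi)$ is average hereditary is equivalent to showing $d(G(\phi)) = \MAD(G(\phi))$. I would first compute $d(G(\phi))$ exactly, by summing the vertex and edge counts contributed by the three gadget families in Karp's construction: the palette triangle, the $n$ variable gadgets (each a triangle sharing the base palette vertex), and the $m$ clause gadgets. This yields a closed-form expression for $d(G(\phi))$ in terms of $n$ and $m$. I would then argue that no induced subgraph $H \subseteq G(\phi)$ attains a strictly larger average degree. The key structural fact I rely on is that $G(\phi)$ is obtained by gluing these gadgets together only along the three palette vertices, so any induced subgraph decomposes along gadget boundaries and its average degree is bounded by a weighted average of per-gadget densities; it therefore suffices to verify the bound gadget-by-gadget.

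The main obstacle I anticipate is the clause gadget, which has the richest internal structure of the three gadget types and consequently the highest internal edge-to-vertex ratio. The most plausible witness to $\MAD(G(\phi)) > d(G(\phi))$ is an induced subgraph contained in, or concentrated around, a single clause gadget. Handling this case will likely require a careful case analysis over the induced subgraphs of one clause gadget, or, should the chosen gadget fail the bound outright, a minor modification of the gadget---subdividing an edge or attaching a sparse pendant structure---that preserves 3-colorability while lowering the local density just enough to restore the inequality. Once this case is settled, combining the polynomial-time bound on the reduction with the average hereditary property of its output yields the theorem.
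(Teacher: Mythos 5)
Your overall strategy is the same as the paper's: run Karp's reduction unchanged and argue that its output $G(\phi)$ is already average hereditary (equivalently, by Proposition~\ref{equivalent}, that $d(G(\phi)) = \MAD(G(\phi))$). However, as written your proposal has a genuine gap: the verification of this density claim is exactly the content of the theorem, and you defer it to an unperformed case analysis of the clause gadget, even allowing that the gadget might ``fail the bound outright,'' in which case you would alter the reduction. A modified gadget is not a harmless patch --- you would then have to re-establish the equivalence ``$\phi$ satisfiable iff the new graph is 3-colorable'' --- so at present the argument is a plan, not a proof. Moreover, the specific decomposition you propose is shaky: the claim that the average degree of an arbitrary induced subgraph is bounded by a weighted average of per-gadget densities requires attributing the attachment edges (clause-to-literal and gadget-to-palette edges) to some gadget, and once you do that the clause gadget carries roughly $12$ edges on $6$ vertices, i.e.\ local density $4$, whereas $d(G(\phi)) = 3\bigl(\tfrac{8C+L+2}{6C+L+3}\bigr) < 4$. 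So a naive gadget-by-gadget comparison against $d(G(\phi))$ cannot close the argument; the inequality is inherently global.

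The paper avoids all of this with a short degree-counting argument that you could adopt: compute $d(G(\phi)) = 3\bigl(\tfrac{8C+L+2}{6C+L+3}\bigr)$ and check it is strictly below $4$; then observe that $G(\phi)$ is connected, every vertex has degree at least $2$, and no two degree-$2$ vertices are adjacent, so deleting any vertex set $U$ removes at least $4|U|$ from the total degree sum (degrees of deleted vertices plus cut edges to the rest). By the elementary fact that $\frac{x-a}{y-b} \leq \frac{x}{y}$ when $\frac{a}{b} \geq \frac{x}{y}$, the average degree of any induced subgraph is at most $d(G(\phi))$, so $G(\phi)$ is average hereditary with no gadget-level case analysis and no modification of Karp's construction.
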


\begin{proof}
    Let $\phi$ be a boolean expression in 3CNF with $L$ literals and $C$ clauses, where $L\geq 1$ and $C\geq 1$.  Let $G(\phi)$ denote the graph constructed from $\phi$ by the reduction given by Karp in \cite{6}. From the reduction we have that $G(\phi)$ has $6C+L+3$ vertices, and $12C+\frac{3}{2}L+3$ edges. We show that $G(\phi)$ is average hereditary.
    $$d(G(\phi)) = 3\left(\frac{8C+L+2}{6C+L+3}\right)$$
    Now we show that $d(G(\phi))< 4$. Suppose that $d(G(\phi))\geq 4$. So we have that $$\frac{8C+L+2}{6C+L+3} \geq \frac{4}{3} \iff 0 \geq \frac{1}{3}L + 2$$
    This is a contradiction as $C, L \in \mathbb{Z}^+$. Now for positive integers $x$, $y$, $a$ and $b$, we know that $\frac{x-a}{y-b} \leq \frac{x}{y} \iff \frac{a}{b} \geq \frac{x}{y}$. For $G(\phi)$, 
    $$d(G(\phi)) = \frac{\sum_{v\in V(G(\phi))}d_{G(\phi)}(v)}{|V(G(\phi))|}$$
    Now suppose $H$ is some induced subgraph of $G(\phi)$ such that $H = G - U$, for some $U \subseteq V(G(\phi))$, then for $H$,
    $$d(H) = \frac{\sum_{v\in V(G(\phi))}d_{G(\phi)}(v) - \sum_{u\in U}d_{G(\phi)}(u)-|[V(H),\overline{V(H)}]|}{|V(G(\phi))| - |U|} = \frac{\sum_{v\in V(G(\phi))}d_{G(\phi)}(v) - d}{|V| - |U|}$$
    As $G(\phi)$ is connected, and $\forall v \in V(G(\phi)),\; d_{G(\phi)}(v) \geq 2$, and no vertex with degree 2 are adjacent to each other, we have that $d  \geq 2(2|U|)$, which implies $\frac{d}{|U|} \geq 4$.
    So, $\frac{d}{|U|} \geq 4 \geq d(G(\phi)) \implies d(H) \leq d(G(\phi))$.
    Therefore $G(\phi)$ is average hereditary. So we have that for each boolean expression $\phi$ in 3CNF, the graph constructed $G(\phi)$, by the reduction given by Karp is average hereditary. So we have that graph \Prob{3-coloring} is $\NPH$ in the class of average hereditary graphs. 
\end{proof}

\section{Conclusion}
In this paper, we introduced this new class of graph. We initially introduced this class to obtain a tighter bound on the chromatic number of graphs in this class. Our initial aim was to find a case where we can bound the chromatic number of a graph in terms of its average degree. After obtaining our bound for average hereditary graphs we were able to generalize that bound and obtain a bound on the chromatic number of any given graph in terms of its maximum average degree.
We then explored the class of average hereditary graphs further. From that, we see that the class of average hereditary graphs itself is quite interesting. We looked at many interesting results regarding this class of graphs. Which includes an equivalent condition to compute if a graph belongs to this class and computational complexity of graph \Prob{3-coloring} when input is restricted to this class.

As the class of Average Hereditary graphs is new and just introduced in this paper there are a lot of opportunities for future work regarding the class of average hereditary graphs, such as exploring the class of graphs with other combinatorial optimization problems. This makes this class more interesting.

\section*{Acknowledgments}
The first author would like to thank Dr. \sloppy Hans Raj Tiwary (\url{https://kam.mff.cuni.cz/~hansraj/}) for providing with his invaluable expertise and guidance throughout the course of this research and for being a mentor in general. Many ideas for the direction to take this research also came from him. The first author would also like to thank Rameez Ragheb (\url{https://habib.edu.pk/SSE/rameez-ragheb/}) for being a mentor throughout the course of this research and prior. We would also like to thank him for verifying proofs of several results and helping in working out the algebra for Lemma~\ref{lemImpov}. Also would like to Acknowledge Meesum Ali Qazalbash (\url{https://www.linkedin.com/in/meesumaliqazalbash/}) for looking at the proof and algebra presented in this paper, verifying the results, and helping in the algebra for Lemma~\ref{lemImpov}. The authors would also like to thank Dr. Yair Caro for suggesting Proposition~\ref*{equivalent}, from this we were able to prove that the average hereditary property can be computed in polynomial time. This also gave us the idea to generalize our special case bound on the chromatic number to a general case bound on the chromatic number in terms of the maximum average degree.
\bibliographystyle{plain} 
\bibliography{graph_paper}

\begin{thebibliography}{10}

\bibitem{10}
V.E. Alekseev, R.~Boliac, D.V. Korobitsyn, and V.V. Lozin.
\newblock Np-hard graph problems and boundary classes of graphs.
\newblock {\em Theoretical Computer Science}, 389(1-2):219--236, 2007.

\bibitem{9}
Vladimir~E. Alekseev.
\newblock On easy and hard hereditary classes of graphs with respect to the
  independent set problem.
\newblock {\em Discrete Applied Mathematics}, 132(1-3):17--26, 2003.

\bibitem{7}
Rowland~Leonard Brooks.
\newblock On colouring the nodes of a network.
\newblock {\em Mathematical Proceedings of the Cambridge Philosophical
  Society}, 37:194--197, 1941.

\bibitem{14}
T.H. Cormen, C.E. Leiserson, R.L. Rivest, and C.~Stein.
\newblock {\em Introduction to Algorithms, fourth edition}.
\newblock MIT Press, 2022.

\bibitem{1}
Reinhard Diestel.
\newblock {\em Reinhard Diestel Graph Theory Electronic Edition 2017.}
\newblock Springer-Verlag Heidelberg, 1997.

\bibitem{16}
Yefim Dinitz.
\newblock Algorithm for solution of a problem of maximum flow in networks with
  power estimation.
\newblock {\em Soviet Math. Dokl.}, 11:1277--1280, 01 1970.

\bibitem{13}
Jack Edmonds and Richard~M. Karp.
\newblock Theoretical improvements in algorithmic efficiency for network flow
  problems.
\newblock {\em J. ACM}, 19(2):248–264, apr 1972.

\bibitem{8}
L.~R. Ford and D.~R. Fulkerson.
\newblock Maximal flow through a network.
\newblock {\em Canadian Journal of Mathematics}, 8:399--404, 1956.

\bibitem{12}
A.~V. Goldberg.
\newblock Finding a maximum density subgraph.
\newblock Technical report, USA, 1984.

\bibitem{15}
Andrew~V. Goldberg and Robert~E. Tarjan.
\newblock A new approach to the maximum-flow problem.
\newblock {\em J. ACM}, 35(4):921--940, oct 1988.

\bibitem{6}
Richard~Manning Karp.
\newblock Reducibility among combinatorial problems.
\newblock Proceedings of a symposium on the Complexity of Computer
  Computations, IBM Thomas J. Watson Research Center, Yorktown Heights, New
  York., 1972.

\bibitem{4}
B.~R. Myers and R.~Liu.
\newblock A lower bound on the chromatic number of a graph.
\newblock {\em Networks}, 1(3):273--277, 1971.

\bibitem{3}
Mariia Soto, Andre Rossi, and Marc Sevaux.
\newblock Three new upper bounds on the chromatic number.
\newblock {\em Discrete Applied Mathematics}, 159(18):2281--2289, 2011.

\bibitem{2}
Douglas~Brent West.
\newblock {\em Introduction to Graph Theory.}
\newblock Prentice Hall, 2001.

\end{thebibliography}

\end{document}